\documentclass[11pt]{amsart}

\usepackage{amsmath,amssymb,amsthm}
\usepackage{mathtools}
\usepackage[margin=1.1in]{geometry}
\usepackage{hyperref}

\newtheorem{theorem}{Theorem}[section]
\newtheorem{lemma}[theorem]{Lemma}
\newtheorem{proposition}[theorem]{Proposition}
\newtheorem{corollary}[theorem]{Corollary}
\newtheorem{question}[theorem]{Question}
\theoremstyle{remark}
\newtheorem{remark}[theorem]{Remark}

\newcommand{\R}{\mathbb{R}}

\newcommand{\AP}{A_P}
\newcommand{\gapp}{\operatorname{gap}}
\newcommand{\conv}{\operatorname{conv}}
\newcommand{\vpi}{\vec{\pi}}
\newcommand{\veta}{\vec{\eta}}

\title[PFA complexity is at most three]{Every string has probabilistic automatic complexity at most three}

\author{Bj{\o}rn Kjos-Hanssen}
\address{University of Hawai\textquoteleft i at M\=anoa}
\email{bjoern.kjos-hanssen@hawaii.edu}

\date{\today}

\begin{document}

\begin{abstract}
Gill (arXiv:2402.13376) introduced the probabilistic automatic complexity
$\AP(w)$ of a finite string $w$: the least number of states of a
probabilistic finite automaton (PFA) for which $w$ is the unique most
probably accepted string of its length. He asked whether $\AP$ is
unbounded, noting that no string with $\AP > 3$ was known
(Question~4.14 of that paper). We answer the question by proving that
$\AP(w)\le 3$ for \emph{every} string $w$ over \emph{every} finite
alphabet. The witnessing three-state automaton is explicit: its reduced
dynamics tracks the pair $(u,u^2)$, where $u$ is the reversed base-$b$
value of the input, and its acceptance functional is a downward parabola
peaked at the value of the target string. Combined with Gill's
classification of the binary strings with $\AP=2$, this completely
determines $\AP$ on binary strings and shows that it is computable there.
\end{abstract}

\maketitle

\section{Introduction}

Shallit and Wang \cite{SW01} defined the deterministic automatic
complexity $A_D(w)$ of a finite string $w$, and Hyde \cite{Hyde13} the
nondeterministic variant $A_N(w)$. Gill \cite{Gill24} introduced the
probabilistic analogue. Throughout, a \emph{probabilistic finite
automaton} (PFA) over a finite alphabet $\Sigma$ is a tuple
$M=(S,\Sigma,\{P_\sigma\}_{\sigma\in\Sigma},\vpi,\veta)$, where
$S=\{1,\dots,k\}$, each $P_\sigma$ is a $k\times k$ row-stochastic
matrix, $\vpi\in[0,1]^k$ is a probability (row) vector of initial
states, and $\veta\in\{0,1\}^k$ is the (column) indicator vector of the
accepting states. The acceptance probability of
$x=x_1x_2\cdots x_n\in\Sigma^n$ is
\[
  \rho_M(x) \;=\; \vpi\, P_{x_1}P_{x_2}\cdots P_{x_n}\,\veta .
\]
Following \cite{Gill24}, set
$\gapp_M(w)=\min\{\rho_M(w)-\rho_M(z): z\in\Sigma^{|w|},\, z\ne w\}$ and
\[
  \AP(w) \;=\; \min\{\,k : \text{there is a $k$-state PFA $M$ with }
  \gapp_M(w)>0 \,\}.
\]
Gill proved $2\le \AP(w)\le A_N(w)+1$ for all $w$, classified the binary
strings with $\AP=2$ (Theorem~4.1 of \cite{Gill24}), and observed
empirically that every binary string of length at most $9$ satisfies
$\AP\le 3$. He then posed:

\begin{question}[{\cite[Question 4.14]{Gill24}}]
Is $\AP$ unbounded? If not, what is its maximum value?
\end{question}

We prove that the empirical bound is the truth:

\begin{theorem}\label{thm:main}
$\AP(w)\le 3$ for every nonempty string $w$ over every finite alphabet.
Consequently the maximum value of $\AP$ over binary strings is exactly
$3$.
\end{theorem}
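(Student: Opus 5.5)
The plan is to build, for each nonempty $w\in\Sigma^n$ with $\Sigma=\{0,1,\dots,b-1\}$, an explicit $3$-state PFA whose state distribution is always a fixed affine function of a pair $(s,s^2)$. Here $s\in[0,1]$ is a real number updated by an affine contraction for each letter read. The acceptance probability will be $s(1-s)=\tfrac14-(s-\tfrac12)^2$, and the parameters will be tuned so that the target $w$ lands exactly at $s=\tfrac12$. The binary claim then follows from Gill's result that some binary strings have $\AP\ne2$, which gives $\AP\ge3$ for them.

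\textbf{Step 1 (encoding).} To each $s\in[0,1]$ associate the probability vector $p(s)=(s^2,\,s-s^2,\,1-s)$, which is nonnegative since $0\le s^2\le s\le1$. More generally, consider the triangle $T=\{(s,v):0\le v\le s\le1\}$ with vertices $(1,1),(1,0),(0,0)$. The point $(s,v)\in T$ corresponds to the barycentric vector $(v,\,s-v,\,1-s)$, and the pure states $e_1,e_2,e_3$ are these three vertices.

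\textbf{Step 2 (letter matrices).} Let $g(s)=as+c$ with $a,c\ge0$ and $a+c\le1$. The induced map
\[
G(s,v)=(as+c,\;a^2v+2acs+c^2)
\]
is affine on $(s,v)$ and satisfies $G(s,s^2)=(g(s),g(s)^2)$. I check that $G$ sends each vertex of $T$ into $T$:
\[
(0,0)\mapsto(c,c^2),\qquad (1,1)\mapsto(a+c,(a+c)^2),\qquad (1,0)\mapsto(a+c,\,2ac+c^2).
\]
The last image lies in $T$ because $0\le 2ac+c^2\le a+c$, which follows from $a,c\ge0$ and $a+c\le1$. Define $P_g$ to be the matrix whose $i$-th row is the barycentric vector of $G(\text{vertex }i)$. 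This matrix is row-stochastic. Since barycentric coordinates are affine, $p(s)P_g=p(g(s))$.

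\textbf{Step 3 (parameters).} Fix $B=2b$ and a parameter $\theta\in[0,b]$, and for each digit $d$ use
\[
g_d(s)=\frac{s+d+\theta}{B}.
\]
The constraints hold: $a=1/B>0$, $c\ge0$, and $a+c=(1+d+\theta)/B\le 2b/B=1$. Take the initial vector $\vpi=p(s_0)$ with $s_0\in[0,1]$, and take $\veta=(0,1,0)^T$. Then
\[
\rho(z)=s_n(z)\bigl(1-s_n(z)\bigr),\qquad
s_n(z)=\frac{s_0}{B^n}+\sum_{t=1}^n (z_t+\theta)B^{t-n-1}.
\]

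\textbf{Step 4 (tuning and uniqueness).} For fixed $w$, the value $s_n(w)$ is continuous in $(s_0,\theta)$.
\begin{itemize}
\item At $s_0=\theta=0$ we have $s_n(w)\le\sum_{k\ge1}(b-1)(2b)^{-k}<\tfrac12$.
\item At $s_0=1,\ \theta=b$, each $g_d$ maps $[\tfrac12,1]$ into $[\tfrac{b+1/2}{2b},1]\subseteq[\tfrac12,1]$, so $s_n(w)>\tfrac12$.
\end{itemize}
Move along the segment joining these two parameter points. By the intermediate value theorem some choice gives $s_n(w)=\tfrac12$ exactly.

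The map $z\mapsto s_n(z)$ is injective on $\Sigma^n$ for that fixed choice. Indeed, if $z\ne z'$ first differ, from the most significant end, at position $t$, then the difference is at least $B^{t-n-1}$ minus a tail bounded by $\sum_{j<t}(b-1)B^{j-n-1}<B^{t-n-1}$, using $b-1<B-1$.

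Hence $\rho(z)=\tfrac14-(s_n(z)-\tfrac12)^2$ is maximized uniquely at $z=w$, so $\gapp_M(w)>0$ and $\AP(w)\le3$.

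The step I expect to be the main obstacle is Step 2. One must realize the quadratic update by a genuinely stochastic $3\times3$ matrix while the only available acceptance functionals are $0/1$ combinations of the coordinates. This is why I rescale the dynamics to put the target at $\tfrac12$, rather than moving the peak of the parabola. It is also why the affine maps need nonnegative slope and intercept, since that is what keeps the vertex $(1,0)$ inside $T$.
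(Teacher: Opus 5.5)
Your proof is correct. It shares the paper's core mechanism: lift $u$ to $(u,u^2)$, keep a triangle invariant, pass to a PFA through barycentric coordinates as in Lemma~\ref{lem:corr}, and combine a concave quadratic acceptance probability with injectivity. Where you differ is in how you line up the peak of the parabola with the target.

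The paper keeps the plain base-$b$ dynamics. The target $t=v(w)$ is then read off from $w$, and the values $v(z)$ fill a grid. It moves the peak to $t$ by choosing the triangle $\Delta_s$ with $s=2t$ and the functional $L(u,q)=u-q/s$. Because invariance of $\Delta_s$ needs $s\ge 1$, it must then split into cases:
\begin{itemize}
\item when $v(w)\le\frac12-b^{-n}$, it takes the digit complement, using invariance of $\AP$ under alphabet permutations;
\item when $v(w)$ lies just below $\frac12$, it shifts the start point $u_0$.
\end{itemize}

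Your triangle $T$ with accepting vertex $(1,0)$ is exactly the paper's $\Delta_1$ with $L(u,q)=u-q$. In other words, it is the paper's third case, with the peak frozen at $\frac12$. By doubling the base and adding the offset $\theta$, you make that case universal: the single affine path $s_0=\mu$, $\theta=\mu b$ always brings $s_n(w)$ to $\frac12$.

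This gives you a uniform argument with no case split and no appeal to symmetry, and your triangle and $\veta$ do not depend on $w$ or $b$. The costs are threefold. The digits occupy only half of base $2b$, so injectivity needs your first-differing-digit estimate rather than being immediate. The tuning parameter comes from solving an equation rather than being read off from $w$. And the gap drops from order $b^{-2n}$ to order $(2b)^{-2n}$.
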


The proof is a short explicit construction. The heuristic is that the
real-valued transition probabilities of even a very small PFA can encode
unboundedly many bits of information; the difficulty is only in steering
the \emph{maximum} of the acceptance probability to a prescribed string.
This is accomplished by two observations. First, the classical
``binary expansion'' dynamics $u\mapsto (u+\sigma)/b$ maps the strings
of length $n$ \emph{injectively} onto an arithmetic grid in $[0,1]$.
Second, the square $q=u^2$ of this quantity also evolves affinely in
$(u,q)$, so a three-state PFA can carry both coordinates, and an
affine functional of $(u,q)$ can implement the concave score
$t^2-(u-t)^2$, peaked at any prescribed target $t$. Uniqueness of the
maximizer is then automatic from injectivity.

\section{Preliminaries: affine self-maps of a triangle}

We use the correspondence between $k$-state PFAs and affine iterated
function systems on the $(k-1)$-simplex from \cite[\S4.1]{Gill24}, in
the following self-contained barycentric form. Recall that if
$\Delta=\conv\{V_1,V_2,V_3\}\subset\R^2$ is a nondegenerate triangle,
every $x\in\R^2$ has unique \emph{barycentric coordinates}
$\lambda(x)=(\lambda_1,\lambda_2,\lambda_3)$ with
$\sum_i\lambda_i=1$ and $x=\sum_i \lambda_i V_i$; moreover
$x\in\Delta$ if and only if $\lambda(x)$ is a probability vector, and
each $\lambda_i$ is an affine function of $x$, namely the unique affine
functional taking the value $1$ at $V_i$ and $0$ at the other two
vertices.

\begin{lemma}\label{lem:corr}
Let $\Delta=\conv\{V_1,V_2,V_3\}\subset\R^2$ be a nondegenerate
triangle, let $(g_\sigma)_{\sigma\in\Sigma}$ be affine maps of $\R^2$
with $g_\sigma(\Delta)\subseteq\Delta$ for all $\sigma$, and let
$x_0\in\Delta$. Let $L$ be the affine functional with $L(V_1)=1$ and
$L(V_2)=L(V_3)=0$. Then there is a $3$-state PFA $M$ such that for
every $z=z_1\cdots z_n\in\Sigma^*$,
\[
  \rho_M(z) \;=\; L\bigl(g_{z_n}\circ g_{z_{n-1}}\circ\cdots\circ
  g_{z_1}(x_0)\bigr).
\]
Explicitly, $\vpi=\lambda(x_0)$, $\veta=(1,0,0)^T$, and the $i$-th row
of $P_\sigma$ is $\lambda\bigl(g_\sigma(V_i)\bigr)$.
\end{lemma}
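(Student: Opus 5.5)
We verify in turn that the explicit data $\vpi=\lambda(x_0)$, $\veta=(1,0,0)^T$ and rows $\lambda(g_\sigma(V_i))$ define a PFA, and that the product formula for $\rho_M$ reproduces the iterated affine map.

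\emph{Step 1: the data define a PFA.} Since $x_0\in\Delta$, $\lambda(x_0)$ is a probability vector. Since $g_\sigma(V_i)\in g_\sigma(\Delta)\subseteq\Delta$, each row $\lambda(g_\sigma(V_i))$ is a probability vector, so $P_\sigma$ is row-stochastic. Clearly $\veta\in\{0,1\}^3$.

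\emph{Step 2: the key identity.} We show that for every $x\in\R^2$ and every $\sigma$,
\[
  \lambda(x)\,P_\sigma \;=\; \lambda\bigl(g_\sigma(x)\bigr).
\]
Write $x=\sum_i\lambda_i(x)V_i$ with $\sum_i\lambda_i(x)=1$. Because $g_\sigma$ is affine and the weights sum to $1$, we get $g_\sigma(x)=\sum_i\lambda_i(x)\,g_\sigma(V_i)$. Substituting $g_\sigma(V_i)=\sum_j (P_\sigma)_{ij}V_j$ gives
\[
  g_\sigma(x)=\sum_j\bigl(\lambda(x)P_\sigma\bigr)_jV_j .
\]
The coefficients $\bigl(\lambda(x)P_\sigma\bigr)_j$ sum to $1$, since each row of $P_\sigma$ sums to $1$. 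By uniqueness of barycentric coordinates for a nondegenerate triangle, they are exactly $\lambda(g_\sigma(x))$. The identity holds for all $x\in\R^2$, but we only need it inside $\Delta$.

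\emph{Step 3: induction and reading off $L$.} Set $x_k=g_{z_k}\circ\cdots\circ g_{z_1}(x_0)$. Induction on $k$ using Step 2 gives
\[
  \vpi P_{z_1}\cdots P_{z_k}=\lambda(x_k).
\]
The base case $k=0$ is the definition $\vpi=\lambda(x_0)$. Multiplying by $\veta$ picks out the first barycentric coordinate, and $\lambda_1(x_n)=L(x_n)$ because $\lambda_1$ is the affine functional equal to $1$ at $V_1$ and $0$ at $V_2,V_3$. Hence $\rho_M(z)=L(x_n)$, as claimed.

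The only point requiring care is the order of composition. Row vectors multiply on the right, so the first letter $z_1$ acts first, which matches $g_{z_n}\circ\cdots\circ g_{z_1}$. Beyond that bookkeeping, the argument rests on the affine-combination step in Step 2, which needs $\sum_i\lambda_i(x)=1$; with that observation in hand, nothing else is an obstacle.
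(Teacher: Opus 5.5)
Your proof is correct and follows essentially the same route as the paper: row-stochasticity from $g_\sigma(V_i)\in\Delta$, then the identity $\lambda(x)P_\sigma=\lambda(g_\sigma(x))$, then induction and reading off the first coordinate. The only cosmetic difference is how the identity is obtained: the paper applies the affine map $\lambda$ to $g_\sigma(x)=\sum_i\lambda_i(x)\,g_\sigma(V_i)$, whereas you expand each $g_\sigma(V_i)$ in the vertices and invoke uniqueness of barycentric coordinates, which amounts to the same thing.
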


\begin{proof}
Since $g_\sigma(V_i)\in\Delta$, each row $\lambda(g_\sigma(V_i))$ is a
probability vector, so $P_\sigma$ is row-stochastic. Because affine
maps commute with affine combinations, for any $x$ with barycentric
coordinates $\lambda(x)$ we have
$g_\sigma(x)=\sum_i \lambda_i(x)\, g_\sigma(V_i)$, and hence, applying
$\lambda$ (itself affine) to both sides,
\[
  \lambda\bigl(g_\sigma(x)\bigr) \;=\; \sum_i \lambda_i(x)\,
  \lambda\bigl(g_\sigma(V_i)\bigr) \;=\; \lambda(x)\,P_\sigma .
\]
By induction, $\lambda\bigl(g_{z_n}\circ\cdots\circ
g_{z_1}(x_0)\bigr)=\vpi\,P_{z_1}\cdots P_{z_n}$. Multiplying on the
right by $\veta=(1,0,0)^T$ extracts the first barycentric coordinate of
the final point, which is exactly $L$ evaluated there.
\end{proof}

\section{The construction over a binary alphabet}

Fix the alphabet $\{0,1\}$. For a bit $b$ define the affine map
\[
  g_b(u,q) \;=\; \Bigl(\tfrac{u+b}{2},\; \tfrac{q+2bu+b^2}{4}\Bigr),
\]
which is engineered so that the relation $q=u^2$ is invariant:
if $q=u^2$ then the second coordinate of $g_b(u,q)$ equals
$\bigl(\tfrac{u+b}{2}\bigr)^2$. Starting from $x_0=(0,0)$ and reading
$z=z_1\cdots z_n$ (first letter first), the terminal point is
$\bigl(v(z),\,v(z)^2\bigr)$, where
\[
  v(z) \;=\; \sum_{i=1}^{n} z_i\, 2^{\,i-n-1}
  \;=\; 0.z_n z_{n-1}\cdots z_1 \ \text{(binary)} .
\]
The map $z\mapsto v(z)$ is a bijection from $\{0,1\}^n$ onto
$\{\,k\cdot 2^{-n} : 0\le k< 2^n\,\}$; in particular it is injective,
and distinct strings of the same length have $|v(z)-v(z')|\ge 2^{-n}$.

\begin{lemma}\label{lem:triangle}
For every $s\ge 1$, the triangle
\[
  \Delta_s \;=\; \conv\{(1,0),\,(1,s),\,(0,0)\}
  \;=\; \{(u,q): 0\le u\le 1,\ 0\le q\le su\}
\]
satisfies $g_b(\Delta_s)\subseteq\Delta_s$ for $b=0,1$.
\end{lemma}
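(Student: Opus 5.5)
Since $g_b$ is affine and $\Delta_s$ is the convex hull of its three vertices, $g_b(\Delta_s)=\conv\{g_b(1,0),g_b(1,s),g_b(0,0)\}$. Because $\Delta_s$ is convex, it therefore suffices to check that the images of the three vertices $(1,0)$, $(1,s)$, $(0,0)$ under $g_0$ and $g_1$ lie in $\Delta_s$. Membership will be tested with the inequality description $\{0\le u\le 1,\ 0\le q\le su\}$, which is equivalent to the convex-hull description: the three inequalities $u\le 1$, $q\ge 0$, $q\le su$ are tight exactly on the three edges, and the constraint $u\ge 0$ is implied by the other two since $s>0$.

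For $b=0$ we have $g_0(u,q)=(u/2,q/4)$. The images of the vertices are $(1/2,0)$, $(1/2,s/4)$ and $(0,0)$. Each satisfies $0\le u\le 1$ and $q\ge0$. For $q\le su$, the only nontrivial case is $s/4\le s/2$, which holds. So no restriction on $s$ arises here.

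For $b=1$ we have $g_1(u,q)=\bigl(\tfrac{u+1}{2},\tfrac{q+2u+1}{4}\bigr)$. The images are:
\begin{itemize}
\item $g_1(1,0)=(1,3/4)$, which needs $3/4\le s$;
\item $g_1(1,s)=(1,(s+3)/4)$, which needs $(s+3)/4\le s$, i.e.\ $s\ge1$;
\item $g_1(0,0)=(1/2,1/4)$, which needs $1/4\le s/2$, i.e.\ $s\ge 1/2$.
\end{itemize}
In each case $u\in[0,1]$ and $q\ge0$ hold trivially. All three conditions follow from $s\ge1$.

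The only step needing care is the vertex $(1,s)$ under $g_1$. Its condition $(s+3)/4\le s$ is exactly the hypothesis $s\ge1$, so this is where the hypothesis is used and why it is sharp for this family of triangles. The rest is routine arithmetic.
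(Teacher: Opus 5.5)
Your proof is correct and follows the paper's argument: reduce to the six vertex images (affinity plus convexity of $\Delta_s$) and check $0\le u\le 1$, $0\le q\le su$ for each, with $(s+3)/4\le s$ being the binding condition. You also justify the equality of the convex-hull and inequality descriptions of $\Delta_s$, and you note which conditions need less than $s\ge1$; the paper takes both for granted.
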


\begin{proof}
An affine map sends $\conv\{V_1,V_2,V_3\}$ into $\Delta_s$ if and only
if it sends each vertex into $\Delta_s$. We compute the six images:
\begin{align*}
 g_0(1,0)&=(\tfrac12,0), &
 g_0(1,s)&=(\tfrac12,\tfrac{s}{4}), &
 g_0(0,0)&=(0,0),\\
 g_1(1,0)&=(1,\tfrac34), &
 g_1(1,s)&=(1,\tfrac{s+3}{4}), &
 g_1(0,0)&=(\tfrac12,\tfrac14).
\end{align*}
Membership in $\Delta_s$ amounts to $0\le u\le 1$ and $0\le q\le su$ in
each case, which reduces to the inequalities
$\tfrac{s}{4}\le\tfrac{s}{2}$, $\tfrac34\le s$,
$\tfrac{s+3}{4}\le s$, and $\tfrac14\le\tfrac{s}{2}$; all hold
precisely because $s\ge1$.
\end{proof}

\begin{theorem}\label{thm:binary}
$\AP(w)\le 3$ for every $w\in\{0,1\}^n$, $n\ge 1$.
\end{theorem}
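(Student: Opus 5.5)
We will apply Lemma~\ref{lem:corr} to the triangle $\Delta_s$ of Lemma~\ref{lem:triangle}, the maps $g_0,g_1$, and $x_0=(0,0)\in\Delta_s$. The only freedom left is the choice of $s\ge 1$, and we use it to make the acceptance functional $L$ (the first barycentric coordinate) equal to a positive affine rescaling of the concave score $t^2-(u-t)^2=2tu-q$ on the invariant curve $q=u^2$. Here $t=v(w)$ is the value of the target string.

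\textbf{Step 1 (labelling vertices).} Lemma~\ref{lem:corr} needs $L(V_1)=1$ and $L(V_2)=L(V_3)=0$. So we take as $V_1$ the vertex where $2tu-q$ should be largest, and require the other two vertices to lie on a common level line of $2tu-q$. The level line through the origin is $q=2tu$. It meets the line $u=1$ at $(1,2t)$, which is a vertex of $\Delta_s$ exactly when $s=2t$. This forces $s\ge1$, i.e.\ $t\ge 1/2$, which is a case restriction. Note that $t\ge\tfrac12$ exactly when $w_n=1$.

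\textbf{Case $w_n=1$.} We set $s=2t\in[1,2)$, with $V_1=(1,0)$, $V_2=(1,s)$, $V_3=(0,0)$. The affine functional $\ell(u,q)=2tu-q$ vanishes at $V_2$ and $V_3$ and equals $2t>0$ at $V_1$. Hence $L=\ell/(2t)$. By Lemma~\ref{lem:corr} and the invariance of $q=u^2$,
\[
\rho_M(z)=\frac{2t\,v(z)-v(z)^2}{2t}=\frac{t^2-(v(z)-t)^2}{2t}.
\]
This is strictly maximized over $z\in\{0,1\}^n$ exactly when $v(z)=t=v(w)$. By injectivity of $v$, that happens only for $z=w$, so $\gapp_M(w)>0$ (indeed $\gapp\ge 2^{-2n}/(2t)$).

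\textbf{Case $w_n=0$.} We expect this to be the main obstacle, since the triangle construction above cannot be used directly. The plan is to exploit the symmetry of swapping the letters $0\leftrightarrow1$. Relabelling the alphabet, i.e.\ exchanging the matrices $P_0$ and $P_1$, gives a PFA $M'$ with $\rho_{M'}(z)=\rho_M(\bar z)$, where $\bar z$ is the bitwise complement. So $\gapp_{M'}(w)=\gapp_M(\bar w)$. Since $\bar w$ ends in $1$, the first case gives a $3$-state PFA $M$ for $\bar w$, and $M'$ then witnesses $\AP(w)\le3$. Equivalently, one may replace $v$ by $1-2^{-n}-v$, which is again injective.

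The only routine checks are that $x_0\in\Delta_s$ and that the vertex images lie in $\Delta_s$, which Lemma~\ref{lem:triangle} supplies.
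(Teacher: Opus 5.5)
Your proposal is correct and is essentially the paper's proof. It uses the same triangle $\Delta_s$ with $s=2v(w)$, the same vertex ordering, the same start point $(0,0)$, and the same acceptance functional $u-q/s$, giving $\rho_M(z)=\bigl(t^2-(t-v(z))^2\bigr)/(2t)$; the case $v(w)<\tfrac12$ (equivalently $w_n=0$) is likewise reduced to the complement by swapping $P_0$ and $P_1$, with your derivation of $s=2t$ from the zero level line of $2tu-q$ adding motivation for a choice the paper simply states.
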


\begin{proof}
Since $\AP$ is invariant under permutations of the alphabet
\cite[\S4.2]{Gill24}, and since the bitwise complement $\bar w$
satisfies $v(\bar w)=1-2^{-n}-v(w)$, while $v(w)$ is an integer
multiple of $2^{-n}$ (so that $v(w)<\tfrac12$ forces
$v(w)\le\tfrac12-2^{-n}$ and hence $v(\bar w)\ge \tfrac12$), we may
assume
\[
  t \;\coloneqq\; v(w) \;\ge\; \tfrac12,
  \qquad s \;\coloneqq\; 2t \;\in\; [1,2).
\]
Apply Lemma~\ref{lem:corr} with the triangle $\Delta_s$ of
Lemma~\ref{lem:triangle}, vertices ordered as $V_1=(1,0)$, $V_2=(1,s)$,
$V_3=(0,0)$, starting point $x_0=(0,0)$, and maps $g_0,g_1$. Solving
$(u,q)=\lambda_1V_1+\lambda_2V_2+\lambda_3V_3$ gives the barycentric
coordinates
\[
  \lambda(u,q) \;=\; \Bigl(u-\tfrac{q}{s},\ \tfrac{q}{s},\ 1-u\Bigr),
\]
so $L(u,q)=u-q/s$, $\vpi=\lambda(0,0)=(0,0,1)$, and reading off
$\lambda(g_b(V_i))$ row by row from the proof of
Lemma~\ref{lem:triangle}:
\[
P_0=\begin{pmatrix}
\tfrac12 & 0 & \tfrac12\\[2pt]
\tfrac14 & \tfrac14 & \tfrac12\\[2pt]
0 & 0 & 1
\end{pmatrix},
\qquad
P_1=\begin{pmatrix}
1-\tfrac{3}{4s} & \tfrac{3}{4s} & 0\\[2pt]
\tfrac{3s-3}{4s} & \tfrac{s+3}{4s} & 0\\[2pt]
\tfrac12-\tfrac{1}{4s} & \tfrac{1}{4s} & \tfrac12
\end{pmatrix},
\qquad
\veta=\begin{pmatrix}1\\0\\0\end{pmatrix}.
\]
By Lemma~\ref{lem:corr} and the invariance of $q=u^2$, for every
$z\in\{0,1\}^n$,
\[
  \rho_M(z) \;=\; L\bigl(v(z),v(z)^2\bigr)
  \;=\; v(z)-\frac{v(z)^2}{s}
  \;=\; \frac{t^2-\bigl(t-v(z)\bigr)^2}{2t}.
\]
As $z$ ranges over $\{0,1\}^n$, the values $v(z)$ are pairwise
distinct and include $t=v(w)$; therefore $\rho_M$ is uniquely maximized
at $z=w$, with
\[
  \gapp_M(w) \;=\; \min_{z\ne w}
  \frac{\bigl(t-v(z)\bigr)^2}{2t} \;\ge\; \frac{4^{-n}}{2t}
  \;\ge\; \frac{4^{-n}}{4} \;>\;0 .
\]
(If the complement was taken at the start, exchange the roles of $P_0$
and $P_1$ in the finished automaton.) Hence $\AP(w)\le 3$.
\end{proof}

\section{General alphabets}

\begin{proposition}\label{prop:general}
Let $\Sigma=\{0,1,\dots,b-1\}$ with $b\ge 2$. Then $\AP(w)\le 3$ for
every nonempty $w\in\Sigma^n$.
\end{proposition}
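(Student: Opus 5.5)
The plan is to run the binary construction of Theorem~\ref{thm:binary} verbatim with base $b$ in place of base $2$. For each digit $\sigma\in\Sigma$ I will use the affine map
\[
  g_\sigma(u,q)\;=\;\Bigl(\tfrac{u+\sigma}{b},\ \tfrac{q+2\sigma u+\sigma^2}{b^2}\Bigr).
\]
It is built so that $q=u^2$ is again invariant. Starting from $x_0=(0,0)$ and reading $z\in\Sigma^n$, the terminal point is $(v(z),v(z)^2)$, where $v(z)=\sum_i z_i b^{\,i-n-1}$ is the reversed base-$b$ value. As before, $z\mapsto v(z)$ is a bijection of $\Sigma^n$ onto the grid $\{k b^{-n}:0\le k<b^n\}$, so distinct strings are separated by at least $b^{-n}$.

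\textbf{Step 1: invariance of $\Delta_s$.} I will show that $\Delta_s=\conv\{(1,0),(1,s),(0,0)\}$ from Lemma~\ref{lem:triangle} is invariant under every $g_\sigma$ when $s\ge1$. By affinity it suffices to check the three vertex images.
\begin{itemize}
\item $g_\sigma(0,0)=(\sigma/b,\sigma^2/b^2)$ lies in $\Delta_s$ since $\sigma/b\le 1\le s$.
\item $g_\sigma(1,0)=\bigl(\tfrac{1+\sigma}{b},\tfrac{\sigma^2+2\sigma}{b^2}\bigr)$ needs $\sigma(\sigma+2)\le sb(\sigma+1)$. This follows from $\sigma(\sigma+2)<(\sigma+1)^2\le b(\sigma+1)$.
\item $g_\sigma(1,s)$ needs $s+\sigma^2+2\sigma\le sb(1+\sigma)$. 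This follows from $s+\sigma^2+2\sigma\le s(1+\sigma)^2\le sb(1+\sigma)$ when $s\ge1$.
\end{itemize}
Also $u=(1+\sigma)/b\le1$ because $\sigma\le b-1$.

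\textbf{Step 2: the automaton.} Lemma~\ref{lem:corr} then gives a $3$-state PFA with $\rho_M(z)=v(z)-v(z)^2/s$. Choosing $s=2t$ with $t=v(w)$, this equals $\bigl(t^2-(t-v(z))^2\bigr)/(2t)$. Injectivity of $v$ gives a unique maximizer $z=w$, with gap at least $b^{-2n}/(2t)>0$.

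\textbf{Step 3: the main obstacle, $s=2t\ge1$.} This requires $t\ge\tfrac12$. In the binary case this was arranged by complementing. Here I would use the digit permutation $\sigma\mapsto b-1-\sigma$, which sends $v$ to $1-b^{-n}-v$. If $t<\tfrac12$, the new target is $t'=1-b^{-n}-t$. When $b$ is even, $\tfrac12$ is a grid point, so $t\le\tfrac12-b^{-n}$ and hence $t'\ge\tfrac12$, exactly as before. When $b$ is odd, the single grid value $t=\tfrac12(1-b^{-n})$ is self-complementary and gives $s=1-b^{-n}<1$, so Step~1 does not apply; the vertex $(1,s)$ under $\sigma=b-1$ genuinely fails.

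To repair this I would track a shifted square instead. I would replace the invariant $q=u^2$ by $q=(u+a)^2$ for a small $a>0$ (say $a=b^{-n}$), which is still carried affinely. I would then redo Step~1 for a triangle of the shape $\conv\{(1,c),(1,c+s),(0,a^2)\}$, with vertices adapted to the shifted curve. The acceptance functional becomes $u-(u+a)^2/s+\text{const}$, which peaks at $u=s/2-a$. Choosing $s=2(t+a)\ge1$ then handles the remaining case.

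The vertex inequalities there are the only real computation, and they are where I expect any difficulty to lie. As a fallback, I would instead relabel digits by a cyclic shift rather than a reflection, to move the target off the centre. Because the score $(t^2-(t-v)^2)/(2t)$ depends only on $|t-v|$ and $v$ stays injective, uniqueness of the maximizer is automatic once invariance of the triangle is secured.
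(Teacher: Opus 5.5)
\textbf{The shifted-square repair fails.} Your Steps 1 and 2, and your diagnosis in Step 3, agree with the paper: for odd $b$ the only bad target is the centre $\tfrac12(1-b^{-n})$, i.e.\ $w=m^n$ with $m=\tfrac{b-1}{2}$. The problem is your primary repair, which cannot work.

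The map $(u,p)\mapsto(u,\,p-2au-a^2)$ is an affine bijection of the plane. It carries your curve $p=(u+a)^2$, your new maps and your start $(0,a^2)$ to $q=u^2$, the original $g_\sigma$ and $(0,0)$. With $s=2(t+a)$, it carries $u-p/s$ to $\tfrac{t}{t+a}\bigl(u-q/(2t)\bigr)$ plus a constant. Triangle invariance is preserved under affine conjugacy. So you are still asking for an invariant triangle whose acceptance functional has slope parameter $2t<1$, exactly as before.

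Concretely, take your triangle $\conv\{(1,c),(1,c+s),(0,a^2)\}$ with the paper's vertex ordering. The other orderings give $\lambda_1=1-u$, or a function that is convex along the curve; neither is useful. The zero line of $\lambda_1$ is then the edge from $(0,a^2)$ of slope $m=c+s-a^2$. Invariance forces the triangle to contain $(1,(1+a)^2)$, the fixed point of the map for the letter $b-1$, so $m\ge 1+2a$. On the curve, $\lambda_1$ is a positive multiple of $u-(u+a)^2/m$ plus a constant, so it peaks at $u=m/2-a\ge\tfrac12$. Choosing $c=a^2$ to obtain your functional gives $m=s$, and then $m\ge 1+2a$ reads $2t+2a\ge 1+2a$, i.e.\ $t\ge\tfrac12$. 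The shift $a$ cancels. Likewise, the vertex check for $(1,a^2+s)$ under the letter $b-1$ fails exactly when $2t<1$.

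\textbf{Your fallback works; make it the main argument.} Any alphabet permutation preserves $\AP$. The cyclic shift $\sigma\mapsto\sigma+1$ sends $m^n$ to $(m+1)^n$; there is no wraparound since $m\le b-2$. This gives
$v=\tfrac{b+1}{2(b-1)}(1-b^{-n})$, and $v\ge\tfrac12$ is equivalent to $(b+1)b^{-n}\le 2$, which always holds. Simpler still: permute so that the last letter $w_n$ becomes $b-1$. Then $v(w)\ge\tfrac{b-1}{b}\ge\tfrac12$ for every $w$, with no case split at all.

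The paper takes a different route in the bad case. It keeps the alphabet and the maps, and moves the start point to $x_0=(u_0,u_0^2)$ with $u_0=b^n(\tfrac12-v(w))\in(0,1)$. This point lies in $\Delta_s$, and it shifts every terminal value by the same amount $u_0b^{-n}$, so the target sits at exactly $t=\tfrac12$ with $s=1$. That is the shift you were reaching for. Unlike your reparametrization, moving the start point actually changes the terminal points relative to a fixed functional. Both routes are correct; the permutation route is the more elementary one.
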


\begin{proof}
Define, for each letter $j\in\Sigma$,
\[
  g_j(u,q) \;=\; \Bigl(\tfrac{u+j}{b},\;
  \tfrac{q+2ju+j^2}{b^2}\Bigr),
\]
so that $q=u^2$ is again invariant, and starting from
$x_0=(u_0,u_0^2)$ the terminal point after reading $z$ is
$(u_n,u_n^2)$ with
\[
  u_n \;=\; v(z) + u_0\,b^{-n},
  \qquad v(z)=\sum_{i=1}^n z_i\, b^{\,i-n-1},
\]
and $z\mapsto v(z)$ is injective on $\Sigma^n$ with distinct values at
distance at least $b^{-n}$.

\emph{Invariance of $\Delta_s$ for $s\ge 1$.} We check the vertex
images as in Lemma~\ref{lem:triangle}; the first coordinates
$\tfrac{j}{b}$ and $\tfrac{1+j}{b}$ lie in $[0,1]$ since
$0\le j\le b-1$, and the conditions $0\le q\le su$ read:
\begin{itemize}
\item $g_j(0,0)=\bigl(\tfrac{j}{b},\tfrac{j^2}{b^2}\bigr)$: need
$\tfrac{j^2}{b^2}\le s\,\tfrac{j}{b}$, i.e.\ $\tfrac{j}{b}\le s$, true
since $\tfrac{j}{b}<1\le s$.
\item $g_j(1,0)=\bigl(\tfrac{1+j}{b},\tfrac{j(j+2)}{b^2}\bigr)$: need
$j(j+2)\le s\,b(j+1)$. Since $j(j+2)=(j+1)^2-1$ and $j+1\le b$, we get
$j(j+2)\le b(j+1)-1< s\,b(j+1)$.
\item $g_j(1,s)=\bigl(\tfrac{1+j}{b},\tfrac{s+j(j+2)}{b^2}\bigr)$: need
$s+j(j+2)\le s\,b(j+1)$, i.e.\ $j(j+2)\le s\,(b(j+1)-1)$, and as above
$j(j+2)\le b(j+1)-1\le s\,(b(j+1)-1)$ since $s\ge1$.
\end{itemize}

\emph{Choosing the target.} We need $t\coloneqq u_0 b^{-n}+v(w)$ to
satisfy $t\ge\tfrac12$ for some admissible $u_0\in[0,1]$ (note
$x_0=(u_0,u_0^2)\in\Delta_s$ automatically, since $u_0^2\le u_0\le
s\,u_0$). If $v(w)\ge\tfrac12$, take $u_0=0$. If
$v(w)\le\tfrac12-b^{-n}$, replace $w$ by its digit complement
$\bar w$ (the image of $w$ under the alphabet permutation
$j\mapsto b-1-j$, which preserves $\AP$); then
$v(\bar w)=1-b^{-n}-v(w)\ge\tfrac12$ and we take $u_0=0$. In the
remaining case $\tfrac12-b^{-n}<v(w)<\tfrac12$, take
$u_0=b^{\,n}\bigl(\tfrac12-v(w)\bigr)\in(0,1)$, so that $t=\tfrac12$
exactly.

Now set $s=2t\ge 1$ and apply Lemma~\ref{lem:corr} with $\Delta_s$,
the same vertex ordering as before, starting point $x_0=(u_0,u_0^2)$
(so $\vpi=\bigl(u_0-\tfrac{u_0^2}{s},\,\tfrac{u_0^2}{s},\,1-u_0\bigr)$),
and the $b$ matrices $P_j$ with rows $\lambda(g_j(V_i))$. As in
Theorem~\ref{thm:binary},
\[
  \rho_M(z) \;=\; \frac{t^2-(t-u_n)^2}{2t}
  \;=\; \frac{t^2-\bigl(v(w)-v(z)\bigr)^2}{2t},
\]
uniquely maximized at $z=w$, with
$\gapp_M(w)\ge b^{-2n}/(2t)>0$.
\end{proof}

\section{Consequences}

\begin{corollary}\label{cor:max}
The maximum of $\AP$ over binary strings is $3$, answering
\cite[Question 4.14]{Gill24}; and the tight upper bound for $\AP(w)$
as a function of $|w|$ asked for in \cite[Question 4.15]{Gill24} is
the constant function $3$ (over any alphabet with at least two
letters).
\end{corollary}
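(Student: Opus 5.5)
The corollary has two parts: an upper bound, which we already have, and a matching lower bound, which must come from Gill. For the upper bound, Theorem~\ref{thm:binary} gives $\AP(w)\le 3$ for every binary $w$, and Proposition~\ref{prop:general} gives the same for every nonempty $w$ over any finite alphabet with $b\ge 2$ letters. So $\sup_w \AP(w)\le 3$, and for each $n$ the function $n\mapsto\max\{\AP(w): |w|=n\}$ is at most $3$.

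For the lower bound on the maximum, I will invoke Gill's classification of the binary strings with $\AP=2$ (Theorem~4.1 of \cite{Gill24}), together with his bound $\AP\ge 2$. From the classification I only need one binary string that is not in his list; it then has $\AP\ge 3$, and hence $\AP=3$. It is enough to exhibit one explicit such string, checked directly against the classification. One could also cite the computations in \cite{Gill24} that produced strings with $\AP=3$. This settles that the maximum over binary strings is exactly $3$, which answers Question~4.14.

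For the claim that the tight upper bound as a function of $|w|$ is the constant $3$, I would show that for every sufficiently large $n$ some string of length $n$ (binary, or over any alphabet of size at least $2$) has $\AP=3$. The plan is to take the $\AP=2$ classification, which describes a restricted family of strings whose number grows slowly in $n$, and pick for each $n$ a binary string outside it. A natural candidate is something like $0^{n-2}10$ or $01^{n-2}0$, or whatever pattern the classification visibly excludes. Since a binary string is also a string over any larger alphabet, I then need $\AP$ not to drop when the alphabet is enlarged. Adding letters only adds competitors $z$, so any witness over the larger alphabet restricts to a witness over $\{0,1\}$ by deleting the extra matrices. Hence $\AP$ over a larger alphabet is at least the binary $\AP$, and the lower bound transfers.

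The main obstacle is the lower bound: it depends entirely on the exact form of Gill's Theorem~4.1, which is not reproduced here. I would first read off from it a concrete infinite family of binary strings excluded from $\AP=2$, one for each length. If the classification only covers short lengths, I would fall back to stating tightness as ``$\le 3$, attained'' and cite Gill's length-$\le 9$ data for attainment. In either case the upper half is immediate from the results above.
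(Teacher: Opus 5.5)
Your route is the paper's: the upper bound comes from Theorem~\ref{thm:binary} and Proposition~\ref{prop:general}, and attainment comes from Gill's Theorem~4.1 together with $\AP\ge2$. The paper simply cites $\AP(0100)=3$ for attainment.

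You also go beyond the paper's proof in two sound ways. First, you address the ``as a function of $|w|$'' clause by asking for an $\AP=3$ string at every large length. Second, you justify the ``any alphabet'' clause by monotonicity. If a witness PFA over $\Sigma\supseteq\{0,1\}$ has positive gap, deleting its extra matrices leaves a minimum over fewer competitors, so the gap stays positive. Hence a binary string's $\AP$ cannot drop when the alphabet grows. The paper leaves both points implicit.

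The one concrete ingredient you propose is wrong, though. With $i=0$, $j=1$ we have $0^{n-2}10=i^{n-2}j\,i$ and $01^{n-2}0=i\,j^{n-2}i$. Both have the form $i^aj^mi$ in Gill's list (reproduced in Corollary~\ref{cor:comp}), so both have $\AP=2$.

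A family that works is $010^{n-2}$ for $n\ge4$; for $n=4$ this is the paper's $0100$. It has three runs, so it is not of the form $i^aj^m$. It is nonconstant but ends in two equal letters, whereas each of $i^aj^mi$, $i^a(ji)^m$, $i^aj(ij)^m$ is either constant or ends in two distinct letters.

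Note also that ``sufficiently large'' really means $n\ge4$. Every binary string of length at most $3$ is in Gill's list, so over $\{0,1\}$ the bound $3$ is attained exactly at lengths $n\ge4$. Your fallback to Gill's length-$\le 9$ data is unnecessary, since his Theorem~4.1 classifies strings of all lengths.
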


\begin{proof}
By \cite[Theorem 4.1]{Gill24} there exist binary strings with
$\AP=3$ (for instance $\AP(0100)=3$, as computed in \cite{Gill24});
Theorem~\ref{thm:main} gives the matching upper bound. A one-state PFA
witnesses no string \cite[\S3]{Gill24}, so $\AP\ge 2$ always, and any
string outside the classification of \cite[Theorem 4.1]{Gill24} has
$\AP=3$ exactly.
\end{proof}

\begin{corollary}\label{cor:comp}
The restriction of $\AP$ to binary strings is computable. In fact,
$\AP(w)=2$ if $w$ belongs to the regular language
\[
  \bigl\{\, i^n j^m,\ i^n j^m i,\ i^n (ji)^m,\ i^n j (ij)^m
  \;:\; n,m\ge 0,\ \{i,j\}=\{0,1\} \,\bigr\}
\]
of \cite[Theorem 4.1]{Gill24}, and $\AP(w)=3$ otherwise.
\end{corollary}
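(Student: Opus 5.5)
The corollary follows by combining Theorem~\ref{thm:binary} with Gill's classification, so the plan is mostly bookkeeping. To show computability, I will exhibit an algorithm that decides which of the two possible values $\AP(w)$ takes. First I will fix the range of $\AP$ on nonempty binary strings. By \cite{Gill24}, a one-state PFA witnesses no string, so $\AP(w)\ge 2$. By Theorem~\ref{thm:binary}, $\AP(w)\le 3$. Hence $\AP(w)\in\{2,3\}$ for every binary $w$.

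Next I will invoke \cite[Theorem 4.1]{Gill24}, which states that $\AP(w)=2$ exactly when $w$ lies in the displayed language $R$. The forward direction gives $\AP(w)=2$ for $w\in R$. The converse direction gives $\AP(w)\neq 2$ for $w\notin R$, and therefore $\AP(w)=3$ by the first step. One point to check carefully is that Gill's theorem really is an ``if and only if'' characterization. If it only supplied sufficiency, the ``otherwise'' clause would not follow, so I will confirm that reading from the cited statement.

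For computability, I will note that $R$ is regular. It is a finite union, over the two choices of $(i,j)$, of the four patterns, and each pattern is given by a regular expression such as $i^*j^*$, $i^*j^*i$, $i^*(ji)^*$, or $i^*j(ij)^*$. The description in the statement uses independent exponents $n,m\ge 0$, so each pattern corresponds exactly to one of these starred expressions. Membership in $R$ is therefore decidable by a finite automaton. The algorithm is: on input $w$, run the automaton for $R$; output $2$ if it accepts and $3$ if it rejects.

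The main obstacle is not mathematical but lies in faithfully transcribing Gill's classification. This includes the degenerate cases, such as the constant strings and strings of length at most $2$, which should all fall into $R$, and the boundary conventions for $n=0$ or $m=0$. I would cross-check a few of these against \cite{Gill24}, for example $0100\notin R$, consistent with $\AP(0100)=3$.
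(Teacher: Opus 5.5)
Your proposal is correct and follows the paper's proof: you sandwich $\AP(w)$ between the lower bound $2$ and the upper bound $3$ of Theorem~\ref{thm:binary}, read \cite[Theorem 4.1]{Gill24} as an exact characterization of the strings with $\AP=2$, and decide membership in the fixed regular language. Your caution about confirming that Gill's classification is an ``if and only if'' is well placed, since the paper relies on exactly that reading here and in Corollary~\ref{cor:max}.
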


\begin{proof}
Immediate from Theorem~\ref{thm:binary} and
\cite[Theorem 4.1]{Gill24}; membership in a fixed regular language is
decidable.
\end{proof}

\begin{remark}
The computability of $\AP$ was open in \cite{Gill24}, where it was
shown that the relation $\AP(w)\le k$ is computably enumerable but
neither the co-enumerability nor the computability of $\AP$ itself was
established. Corollary~\ref{cor:comp} settles this for binary
strings by brute force: the function is eventually a two-valued,
regularly-described object. For alphabets of size $b\ge 3$,
Theorem~\ref{thm:main} still pins $\AP$ to $\{2,3\}$, and
computability would follow from a classification of the strings with
$\AP=2$ over such alphabets, which \cite{Gill24} does not carry out.
\end{remark}

\begin{remark}
Theorem~\ref{thm:main} also bears on the other questions of
\cite{Gill24}. Question~4.16 (a notion of $\AP$-randomness) becomes
vacuous: no string attains a length-dependent maximum, which in
hindsight explains the empirical wall at $3$ observed in
\cite{Gill24}. The measure is coarse for a structural reason made
explicit by the construction: the real-valued transition probabilities
of a three-state PFA can absorb the entire description of $w$ (here,
in the single parameter $s=2\,v(w)$). By contrast, the gap achieved by
our witness decays like $b^{-2n}$, so the refined measures proposed in
\cite[\S\S5--6]{Gill24}---the parameterized complexity $A_{P,\delta}$
for fixed $\delta>0$, the gap structure function
$\gamma^k(w)=\max_{A}\gapp_A(w)$, and the measure-weighted variants
$A_\mu$---remain genuinely graded and are, we would argue, the right
objects for further study. Our construction gives only the trivial
bound $A_{P,\delta}(w)\le 3$ for $\delta< b^{-2|w|}/2$, consistent
with the observation in \cite{Gill24} that small automata cannot give
long strings large gaps. Finally, since PFAs are special generalized
automata, Theorem~\ref{thm:main} gives $A_{\mathcal G}(w)\le 3$ in the
notation of \cite[\S6.1]{Gill24}; whether $A_{\mathcal G}(w)\le 2$ for
all binary $w$ (\cite[Question 6.1]{Gill24}) remains open.
\end{remark}

\begin{remark}[Verification]
Beyond the proofs above, the construction of Theorem~\ref{thm:binary}
was verified computationally in exact rational arithmetic: for every
binary string $w$ of length at most $8$, the explicit automaton was
built and $\rho_M(z)$ was computed by direct matrix products for all
$2^{|w|}$ strings $z$ of the same length, confirming in each case that
$z=w$ is the strict unique maximizer; the closed form
$\rho_M(z)=v(z)(s-v(z))/s$ was likewise confirmed on several hundred
randomly chosen pairs $(s,z)$ with $|z|\le 14$, and on longer sample
strings including those identified in \cite{Gill24} as having
$\AP=3$.
\end{remark}

\section*{Acknowledgments}

The construction in this note was found in conversation with, and
verified with the assistance of, Claude Fable 5 (Anthropic).
The proof was formalized in Lean with assistance from Aristotle (Harmonic) \cite{KH26}.
The
author thanks Kenneth Gill for the beautiful paper \cite{Gill24} that
prompted this work.
This work was supported by a travel grant from the Simons Foundation for the period 2026--2031.


\begin{thebibliography}{9}

\bibitem{Gill24}
K.~Gill,
\emph{Probabilistic automatic complexity of finite strings},
arXiv:2402.13376 [cs.FL], 2024.

\bibitem{Hyde13}
K.~Hyde,
\emph{Nondeterministic finite state complexity},
MA thesis, University of Hawai`i at M\=anoa, 2013.

\bibitem{KH26}
B.~Kjos-Hanssen,
\emph{pfa},
github repository at \url{https://github.com/bjoernkjoshanssen/pfa}, 2026.


\bibitem{SW01}
J.~Shallit and M.-W.~Wang,
\emph{Automatic complexity of strings},
J.\ Autom.\ Lang.\ Comb.\ \textbf{6} (2001), no.~4, 537--554.

\end{thebibliography}
\end{document}